\title{Domain Specific Hierarchical Huffman Encoding}
\author{K.Ilambharathi, G.S.N.V.Venkata Manik, N.Sadagopan,B.Sivaselvan}
\institute{Department of Computer Science and Engineering,\\ Indian Institute of Information Technology, Design and Manufacturing, Kancheepuram, Chennai, India. \\
\email{\{coe09b007,coe09b008,sadagopan,sivaselvanb\}@iiitdm.ac.in}}
\begin{document}
\maketitle
\begin{abstract}
In this paper, we revisit the classical data compression problem for domain specific texts.  It is well-known that classical Huffman algorithm is optimal with respect to prefix encoding and the compression is done at character level.  Since many data transfer are domain specific, for example, downloading of lecture notes, web-blogs, etc., it is natural to think of data compression in larger dimensions (i.e. word level rather than character level).  Our framework employs a two-level compression scheme in which the first level identifies frequent patterns in the text using classical frequent pattern algorithms.  The identified patterns are replaced with special strings and to acheive a better compression ratio the length of a special string is ensured to be shorter than the length of the corresponding pattern.
After this transformation, on the resultant text, we employ classical Huffman data compression algorithm.  In short, in the first level compression is done at word level and in the second level it is at character level.  Interestingly, this two level compression technique for domain specific text outperforms classical Huffman technique.  To support our claim, we have presented both theoretical and simulation results for domain specific texts.
\end{abstract}
\section{Introduction}
Data transfer between a pair of nodes is a fundamental problem in computer networks.  Data compression is a technique that speeds up the data transfer by compressing the data at the sender and the original data is recovered at the receiver by employing a decompression technique.   Data compression (decompression) is a classical problem in computer science and it has attracted many researchers in the past and the popular one is due to Huffman.  Huffman data compression technique does character-level compression and does not assume anything about the underlying domain.  Huffman's approach is the following: assign a shorter code for a character which occurs most in the text to be compressed.  Interestingly, this approach is optimal with respect to prefix encoding.   With the discovery of Data mining and in particular, the data compression perspective of data mining looks at the text from a larger dimension and focuses on identifying patterns (words) that occur frequently in the text.   This line of research was initiated in \cite{corpus,bang}. In both approaches there is little assumption about the input text  and the patterns to be searched is precisely from the standard dictionary words.  However, many data transfer operation is domain specific, for example, downloading of lecture notes, web-blogs, etc.  Moroever, we noticed that the data available in the Web Servers (Academic Servers) are tagged or classified according to the domain from which the text is derived, for example, the blog websites or news aggregators running on the web servers display plots (news) tagged with the domain of the text.  A technical blog on programming (computer science) containing posts related to computer science can be better compressed and sent to the readers.  These observations motivate us look at data compression perspective for domain specific text.  Moreover, the existing approaches identify patterns using dictionary as the reference which are not efficient enough for domain specific text as most of the domain specific keywords do not appear in the dictionary.  So, this calls for a different data compression approach for domain specific text.  Therefore, the combinatorial problem at hand is to compress a domain specific text based on the frequency of patterns generated from the text and the objective is to maximize the compression ratio by minimizing the size of file to be transferred between sender and the receiver.  \\
{\bf Related Work:} The study of data compression was initiated by Huffman\cite{aho}.  His technique focuses on character-level compression using the frequency count of the characters in the text.  Huffman's result is very well-known in the literature and it is in use even today.  In \cite{amar}, the possibility of data compression by replacing certain characters in the words by a special character $'*'$ and retaining a few characters so that the word is still retrievable unambiguously were discussed.   It is important to note that any compression technique must be loss less and towards this end Ian et al. proposed a language model for representing the data efficiently through the identification of new tokens and tokens in context of the text under consideration.  A slightly different approach inspired by Pitman Shorthand was adopted in \cite{pitman}.  The approach is to encode a group of successive 2-3 text characters into a single code.   In \cite{pitman}, it is also shown that further application of Huffman coding on the codes generated is possible and is expected to result in a greater compression.   As far as pattern learning and discovery algorithms are concerned, apart from classical frequent pattern mining algorithms \cite{datamining,dm}, the use of Genetic algorithms to arrive at rules or hypotheses for pattern learning in the text compression is presented in \cite{genetic}.  Searching of fixed length patterns in the text can be done very efficiently using the 'TARA' algorithm proposed in \cite{tara}.  Two level dictionary based text compression scheme is proposed in \cite{bang}.  It involves the transformation of the original text with a dictionary of fixed frequent English words.  The disadvantage is the need for a huge dictionary to be present on both the compressor and de-compressor.\\
{\bf Our work:} In this paper, we present a data compression algorithm for a domain specific text.  As mentioned before, many data transfers are domain specific and it is natural to think of domain specific data compression algorithms.  We propose a framework for compression and it works for any domain in general.  For our discussion purposes, we work with {\em computer science} domain.  Our framework employs a two-level compression scheme in which the first level identifies frequent patterns in the text using classical frequent pattern algorithms.  The identified patterns are replaced with special strings and to ensure a better compression ratio the length of a special string is shorter than the length of the corresponding pattern.   After this transformation, on the resultant text, we employ classical Huffman data compression algorithm.  In short, in the first level compression is done at word level and in the second level it is at character level.  Interestingly, this two level compression technique for domain specific text outperforms classical Huffman technique.  To support our claim, we have presented both theoretical and simulation results for domain specific texts.\\
{\bf Road map:} In Section \ref{theoryresults}, we present theoretical results of our framework.  Simulation results of our proposed algorithm is presented in Section \ref{simulationresults}.  We conclude this section with a flowchart describing how our proposed approach is employed for compression and decompression stages during data transfer.
\section{Hierarchical Huffman Encoding: Theory and Simulation}
\label{theoryresults}
In this section, we present a theoretical study of Hierarchical Huffman encoding followed by simulation results.  We first discuss our approach to find frequent patterns following which we present a polynomial-time algorithm for Hierarchical Huffman encoding and it is polynomial in the size of the input text.  In the subsequent sections, we present an implementation using Hash table data structure followed by the run-time analysis of the algorithm.  Further, we show that Hierarchical Huffman encoding achieves better compression ratio than classical Huffman for Domain specific text.  We support our theoretical study by a thorough simulation of Hierarchical-Huffman algorithm for various Domain specific text inputs. Our findings, both theoretical and simulation reveal that the proposed two-level compression outperforms classical Huffman encoding algorithm.
\subsection{Identifying Frequent Patterns from the text}
We mine the input text to identify frequent patterns.  We make use of Python Natural Language Processing Toolkit (NLTK). 
\begin{itemize}
\item The NLTK toolkit supports the reading of the training data into the so called Corpus Reader, a specialized object class to enable faster access to large text files stored on the secondary memory.
\item The training data are loaded into the Corpus Reader in NLTK along with parameters {\em Length} and {\em Frequency}.  The clustering is done based on the parameter values.  The words extracted from the text are clustered into 4 clusters based on the parameters provided.
\item The 4 clusters are {\em infrequent short}, {\em infrequent long}, {\em frequent short}, and {\em frequent long} clusters. The NLTK clustering algorithm is so designed to choose only the {\em frequent long} cluster.  The other 3 clusters are avoided because their contribution in the improvement of compression ratio is minimal.  i.e., the infrequent long and infrequent short clusters are not replaced with special strings and they are simply passed to the next stage of algorithm.   Similarly, the frequent short patterns are also not replaced with special strings as the overhead for replacing short pattern with a special string is more than leaving the pattern as such in the text.
\item Since our approach is for a domain specific text, we also append the keywords (frequent patterns) from the standard text books to our corpus generated out of NLTK algorithm.  The complete set of the frequent patterns is now sorted in the order of their length and written to a file further processing.
\end{itemize}
\begin{figure}
\begin{center}
\includegraphics[scale=0.6]{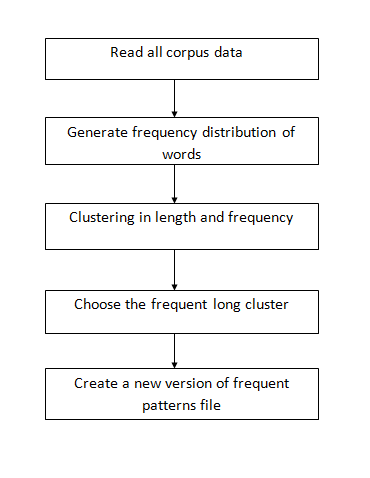}
\includegraphics[scale=0.6]{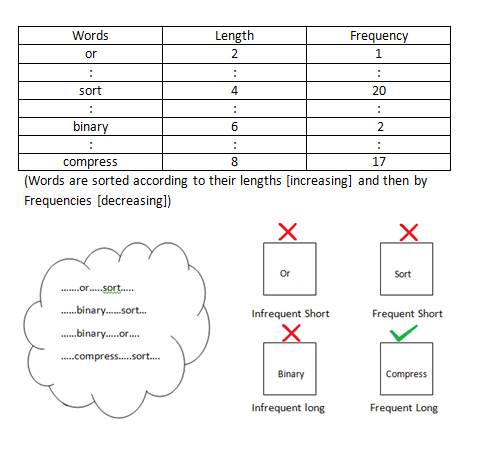}
\caption{The Flowchart of Mining Algorithm to find Clusters}
\end{center}
\end{figure}
\subsection{Hierarchical Huffman Algorithm}
Hierarchical Huffman algorithm for domain specific texts is presented in Algorithm \ref{tree-alg5}.
\begin{algorithm}[h]
\caption{Hierarchical Huffman Encoding {\tt Hierarchical-Huffman(Text T)}} \label{tree-alg5}
\begin{algorithmic}[1]
\STATE{Get Patterns of length at least 3 using Frequent Pattern Algorithm to populate {\em PATTERN-DATABASE}}
\STATE Also, get keywords of length at least 3 from standard texts (specific to domain) to populate {\em PATTERN-DATABASE}
\STATE Let $P=\{P_1,P_2,\ldots,P_k\}$ denote the set of patterns such that $|P_i| \geq 3$ and $R=\{r_1,\ldots,r_k\}$ denote the set of replacement strings for encoding.
\WHILE{text $T$ is not exhausted}
\FOR{each word $w$ of $T$, check whether $w$ is a pattern in $P$}
\IF{$w$ is $P_i$ for some $i$}
\STATE{Replace $P_i$ by $r_i$ in $T$}
\ELSE
\STATE{}
\ENDIF
\ENDFOR
\ENDWHILE
\STATE{Let $T_{level1}$ is the updated text of $T$}
\STATE{Call {\tt Classical-Huffman($T_{level1})$}}
\STATE{Let $T_H$ be the Huffman tree corresponding to $T_{level1}$ and $C_H$ be the corresponding codes for encoding}
\end{algorithmic}
\end{algorithm}
\subsection{Implementation of Hierarchical-Huffman}
\begin{itemize}
\item We maintain a Hash-table to store the set $P$ of patterns obtained.  Against, each pattern $P_i$, we store the replacement string $r_i$.  For a given $P_i$, the location in Hash table is identified using the function {\em MAP-CONTAINER()} available in UNIX systems.
\item At the decoding stage, for a given $r_i$, we can uniquely retrieve $P_i$ using a bijective function.  
\item To implement {\em classical-Huffman}, we make use of a Min-heap data structure.
\end{itemize}
\subsection{Run-time Analysis of Hierarchical-Huffman}
Let the size of the text $T$ be $n$, $n$ being the number of characters in $T$.  Clearly, $|P| \leq n$.  On an average, the Hash table operations {\em Insert} and {\em Search} take $O(1)$ time.  The operations supported by Min-heap for Classical Huffman can be implemented in $O(n\log n)$ time.  Therefore, the overall time-complexity is $O(n \log n)$.  In the worst case, the Hash table operations incur $O(n)$ which is the size of the Hash table.  The overall run-time is still  $O(n \log n)$.  The data structure {\tt B-trees} can also be used to maintain the set of patterns instead of Hash table.  For B-trees, the dictionary operations incur $O(\log n)$ time. Note that the overall run-time is still $O(n.\log n)$.
\subsection{Hierarchical Huffman Outperforms Classical Huffman}
In this section, we show that Hierarchical Huffman achieves a better compression ratio than classical Huffman for domain specific input text. 
\begin{theorem}
Hierarchical Huffman Outperforms Classical Huffman.
\end{theorem}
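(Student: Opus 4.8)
The statement asserts that the total number of bits produced by \texttt{Hierarchical-Huffman(T)} is strictly smaller than that produced by \texttt{Classical-Huffman(T)} on the same domain-specific text $T$; equivalently, that the compression ratio improves. I would write $B_C$ for the number of bits in the classical encoding of $T$ and $B_H$ for the number of bits in the encoding of $T_{level1}$ produced at the second level of the hierarchical scheme, and prove $B_H < B_C$. First I would fix notation: let $\Sigma$ be the alphabet of $T$, let $f_c$ be the number of occurrences of character $c$ in $T$, and let $\ell_c$ be the length of the classical Huffman codeword assigned to $c$, so that $B_C = \sum_{c \in \Sigma} f_c\,\ell_c$. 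For each pattern $P_i$ let $g_i$ denote the number of occurrences of $P_i$ that are actually rewritten to $r_i$ in the \textbf{while} loop of Algorithm~\ref{tree-alg5}. I assume, as the implementation does, that the pattern table $P \to R$ is shared between compressor and decompressor, so that only the encoded text is transmitted.

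The plan is to proceed in two moves. First I would record the elementary length inequality: since every replacement satisfies $|r_i| < |P_i|$, rewriting $g_i$ occurrences of $P_i$ shortens the text by $g_i(|P_i| - |r_i|) \ge g_i$ characters, so $|T_{level1}| < |T|$ whenever at least one pattern is rewritten. This is where the hypothesis that $T$ is domain specific enters: it guarantees that the \emph{frequent long} cluster is nonempty, hence some $g_i > 0$. Second, and this is the heart of the argument, I would not analyse the hierarchical Huffman tree directly, but instead exhibit one particular feasible prefix code for $T_{level1}$ and bound $B_H$ by its cost, using the optimality of Huffman coding stated earlier: $B_H$ is no larger than the cost of \emph{any} prefix code on $T_{level1}$.

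To build such a feasible code I would start from the classical Huffman tree $T_H$ of $T$ and modify it locally: retain the codeword $\ell_c$ for every original character that still appears, and attach the fresh symbols introduced by the replacement strings $r_i$ as new leaves. The cost of this induced code on $T_{level1}$ can then be compared to $B_C$ occurrence by occurrence: each rewritten occurrence of $P_i$ removes the $\sum_{c \in P_i}\ell_c \ge |P_i| \ge 3$ bits that classical Huffman spent on that pattern and pays instead only the shorter cost of encoding $r_i$. Summing the per-occurrence savings over all $i$ gives a candidate inequality of the form $(\text{cost of induced code}) = B_C - \sum_i g_i\big(\sum_{c\in P_i}\ell_c - \mathrm{cost}(r_i)\big)$, and the result follows once the bracketed quantity is shown to be positive.

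The main obstacle is precisely controlling that bracketed quantity: attaching the new leaves for the $r_i$ perturbs the Kraft budget of the tree, so the codewords of the retained characters cannot all keep their original lengths, and the character frequencies in $T_{level1}$ differ from those in $T$. A fully rigorous proof therefore cannot treat the per-occurrence savings in isolation; it must bound the extra cost incurred by re-balancing the tree and argue that the frequency of each replaced pattern is large enough that its savings dominate. I expect this step to require the quantitative domain-specific assumption, namely that the rewritten patterns are both long ($|P_i|\ge 3$) and genuinely frequent ($g_i$ large), without which the statement can fail on adversarial distributions. Making that assumption explicit and showing it forces $\sum_i g_i(\sum_{c\in P_i}\ell_c - \mathrm{cost}(r_i)) > 0$ is the crux of the whole proof.
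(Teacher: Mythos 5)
Your route is genuinely different from the paper's, and it is the more honest one. The paper's proof never compares bit counts at all: it observes that $|r_i|\le|P_i|$ implies $|T_{level1}|\le|T|$ at the character level, and then asserts that because Classical Huffman is optimal with respect to prefix encoding on $T_{level1}$, ``Hierarchical Huffman is optimal'' and the claim follows. That last step is a non sequitur --- optimality of the second-stage code on $T_{level1}$ says nothing about how its total bit count compares to the optimal code on $T$, since the symbol distribution (and hence the per-character entropy) of $T_{level1}$ may be worse than that of $T$. Your plan of exhibiting an explicit feasible prefix code for $T_{level1}$, derived from the Huffman tree $T_H$ of $T$, and then invoking Huffman optimality on $T_{level1}$ to get $B_H \le (\text{cost of the induced code})$, is exactly the right device for turning the claim into a genuine inequality between $B_H$ and $B_C$; the paper has no analogue of it.

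That said, your proposal does contain a genuine gap, and you have correctly located it yourself: the per-occurrence accounting $(\text{cost of induced code}) = B_C - \sum_i g_i\bigl(\sum_{c\in P_i}\ell_c - \mathrm{cost}(r_i)\bigr)$ is not yet justified, because grafting new leaves for the symbols of the $r_i$ onto $T_H$ forces some retained codewords to lengthen (the Kraft budget), and because the $\ell_c$ were optimal for the frequencies in $T$, not in $T_{level1}$. Closing this requires a quantitative hypothesis --- lower bounds on the $g_i$ and on $|P_i|-|r_i|$ relative to the number and depth of new leaves introduced --- and without such a hypothesis the strict inequality $B_H < B_C$ is simply false in general (a text with rare, short, high-entropy replacement strings is a counterexample). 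So neither your argument nor the paper's establishes the theorem as stated; the difference is that you have named the missing step, whereas the paper silently conflates ``fewer characters'' with ``fewer bits.'' To finish along your lines you would need to state the frequency threshold as an explicit hypothesis of the theorem and prove the displayed sum is positive under it.
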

\begin{proof}
Consider the set $P=\{P_1,P_2,\ldots,P_k\}$ of patterns such that $|P_i| \geq 3$.  Since each $P_i$ occurs at random in text $T$, we denote the number of occurrences of each $P_i$ using a random variable.  Let $X_i$ be a random variable denoting the number of occurrences of $P_i$ in $T$.  In $T_{level1}$, each $P_i$ is replaced with a special string $r_i$ such that $|r_i| \leq |P_i|$.  This is an invariant maintained by our algorithm during each scan of the text $T$.  Moreover, the patterns of length at most two are discarded and not stored in the Hash table.  The set $P$ is organized in such  a way that for each $P_i$ and $P_j$, $i < j$, the replacement strings $r_i$ and $r_j$ are such that $|r_i| \leq |r_j|$.  Therefore, the size of the text resulting from first-level compression is 
\begin{itemize}
\item $T_{level1}= r_1.X_1+\ldots +r_k.X_k + T_{infrequent} + T_{shortpattern}$, where $T_{shortpattern}$ denotes patterns of length at most 2 and $T_{infrequent}$ denotes patterns that are infrequent. 
\item The compressed text $T_{level1}$ contains replacement strings for patterns which are frequent (say, for example, it occurs at least 10 times in $T$) and each such pattern is of length at least 3 in $T$.
\item Since $|r_i| \leq |P_i|$, $T_{level1} \leq T$, where $T$ is the original text size.  i.e., to say that the size of the text resulting from first level compression (replacement of patterns by special strings) is at most the input text.
\item In the second level compression, $T_{level1}$ is given as an input to Classical-Huffman.  It is a well-known fact that Classical Huffman is optimal with respect to prefix encoding.  Therefore, Hierarchical Huffman is optimal.  Hence, the claim follows. \qed
\end{itemize}
\end{proof}
{\bf Inference:} Note that if $|r_i| < |P_i|$, for many $P_i$'s and the frequency count of each $P_i$ is also higher, then the size of the compressed text $T_{level1} << T$. \\ \\
{\bf Compression Ratio:} {\tt Compression Ratio}= $\frac{\mbox{Performance of Hierarchical Huffman}}{\mbox{Performance of Classical Huffman}}$, where \\ \\
Performance of Hierarchical Huffman=$\frac{\mbox{Two level compression on $T$}}{\mbox{Input Text $T$}}$ \\ \\
Performance of Classical Huffman=$\frac{\mbox{Classical Huffman on $T$}}{\mbox{Input Text $T$}}$. \\ \\ 
Essentially, Compression Ratio=$\frac{\mbox{Two level compression on $T$}}{\mbox{Classical Huffman on $T$}}$\\\\
Two level compression on $T$ refers to the replacement of $P_i$'s by $r_i$'s followed by Classical Huffman.  Clearly, the lower the ratio, the better is the two-level compression.  
\section{Simulation of Hierarchical-Huffman Algorithm}
\label{simulationresults}
In this section, we validate our theoretical study by a thorough simulation of Hierarchical Huffman for different input texts. We have considered input texts from Computer Science Domain.  Keywords from standard computer science texts are also considered as patterns for the study.  Our simulation includes various text files whose size ranging from 500kB to 2MB.  Various plots and findings from simulation are given below:
\begin{itemize}
\item  Figure \ref{fig1}, illustrates the performance of Hierarchical Huffman and Classical Huffman for different text inputs.  From the plot we infer that for large input texts, the performance of Hierarchical Huffman is much better than the performance of classical Huffman.  This is true because, for large input texts, frequency of patterns increases which lead to reduction in the size of the text output by two-level compression routine.\\
\begin{figure}
\begin{center}
\caption{Input Text vs Performance} 
\includegraphics[scale=0.6]{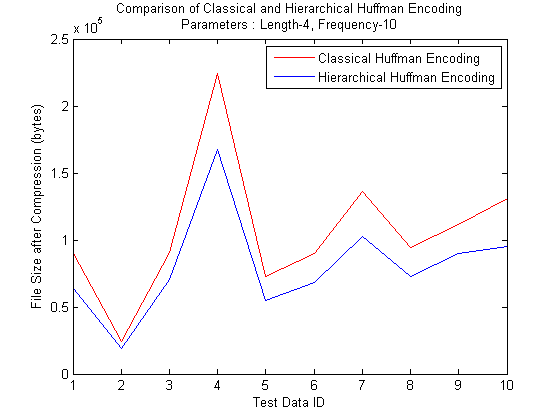}\label{fig1}
\end{center}
\end{figure}
\item In Figure \ref{fig2}, we show the plot between compression ratio and various input texts.  Recall that, from our theoretical study we infered that the higher the input text size, the better the compression ratio.  This is precisely evident in Figure \ref{fig2} as well.
\begin{figure}
\begin{center}
\caption{Input Text vs Compression Ratio} 
\includegraphics[scale=0.6]{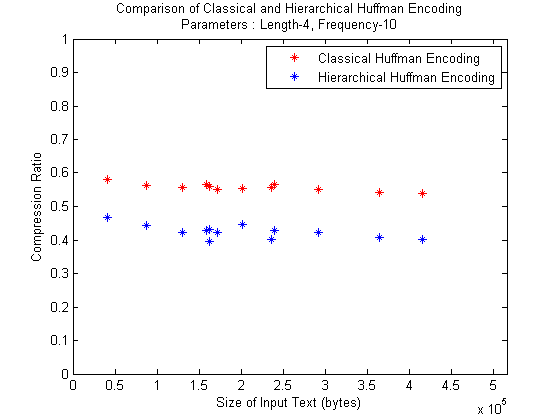}\label{fig2}
\end{center}
\end{figure}
\item Theoretical analysis reveals that to get a good compression ratio, the input text must contain smaller patterns with large frequencies.  For input texts with large patterns, it appears like Hierarchical Huffman as good as classical Huffman and this observation is clearly evident from our simulation study and it is illustrated in Figure \ref{fig3}
\begin{figure}
\begin{center}
\caption{Comparison of Two Techniques for Large Patterns} \label{fig3}
\includegraphics[scale=0.6]{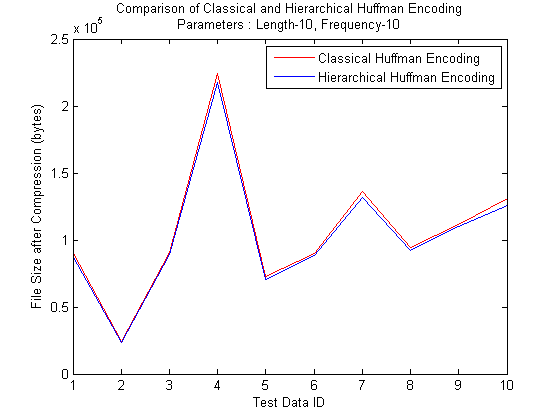}
\end{center}
\end{figure}
\item Note that during data transmission, the Hash table of reference strings of patterns will be sent along with the input text.  Since it is a domain-specific downloads, it is sufficient to send the Hash table exactly once.  Although, it is an overhead, the performance can be seen if the number of downloads is large.  This is precisely illustrated in Figure \ref{fig4}.  Due to this overhead, if the number of downloads is small, then classical Huffman performs better than the Hierarchical Huffman.  There is {\em critical point} which denotes the minimum number of downloads beyond which Hierarchical Huffman outperforms classical Huffman.
\begin{figure}
\begin{center}
\caption{Identifying Critical Point for the two Compression Schemes} \label{fig4}
\includegraphics[scale=0.6]{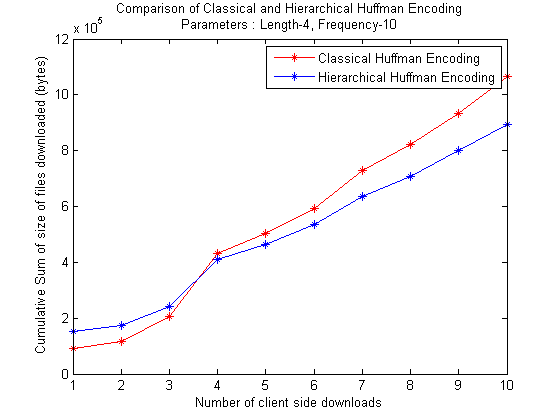}
\end{center}
\end{figure}
\item As mentioned before, shorter patterns with good frequency count  gives good compression ratio than larger patterns with same frequency count.  This observation is validated for different texts and it is illustrated in Figure \ref{fig5}.  The plot is done by taking average of the compression ratios for different input texts versus the vector (length of the pattern,frequency of the pattern).
\begin{figure}
\begin{center}
\caption{Effects of Shorter vs Larger Patterns on the Compression Ratio} \label{fig5}
\includegraphics[scale=0.6]{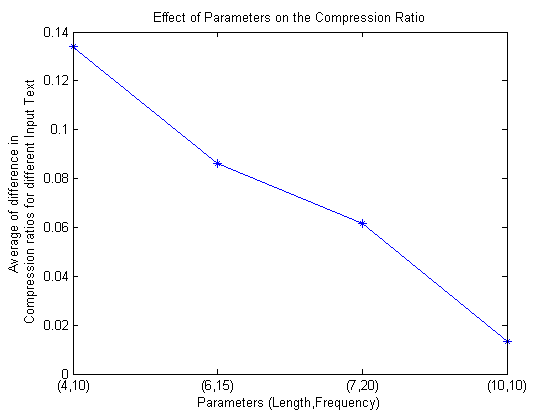}
\end{center}
\end{figure}
\end{itemize}
\subsection{Conclusions and Further Research}
In this paper, we have proposed a framework for data compression for domain specific input text.  Our framework involves two-level compression in which the first level is done at word level and the second level is at character level.   We have also shown that this two-level approach outperforms classical Huffman for domain specific input text.   All our claims are supported by theoretical results and validated with a thorough simulation.   An interesting problem for further research is to analyse our two-level scheme for video compression.

\bibliographystyle{splncs}
\bibliography{illam-ref}   
\end{document}